\newtheorem{theorem}{Theorem}
\newtheorem{proposition}{Proposition}
\newcommand{\vo}[1]{\boldsymbol{#1}} 
\newcommand{\mo}[1]{\boldsymbol{#1}} 
\newcommand{\real}{\mathbb{R}}
\newcommand{\Exp}[1]{\boldsymbol{\mathsf{E}} \left[#1\right]}
\newcommand{\uniform}[1]{\mathcal{U}_{#1}}
\newcommand{\x}{\vo{x}} 
\newcommand{\xdot}{\dot{\vo{x}}} 
\renewcommand{\u}{\vo{u}} 
\newcommand{\param}{\vo{\Delta}} 
\newcommand{\domain}[1]{{\mathcal{D}_{#1}}} 
\newcommand{\Y}{\vo{Y}} 
\newcommand{\A}{\mo{A}} 
\let\B=\undefined
\newcommand{\B}{\mo{B}} 
\newcommand{\basis}[2]{%
 \phi_{#1}
  \ifthenelse{\isempty{#2}}%
    {}
    {({#2})}
}
\newcommand{\xpc}{\x_{pc}} 
\newcommand{\xpcdot}{\dot{\x}_{pc}} 
\newcommand{\Apc}{\mo{A}_{pc}} 
\newcommand{\I}[1]{\vo{I}_{{#1}}} 
\renewcommand{\vec}[1]{\boldsymbol{\mathsf{vec}}\left({#1}\right)}
\newcommand{\X}{\mo{X}} 
\newcommand{\W}{\mo{W}} 
\newcommand{\K}{\mo{K}} 
\newcommand{\set}[1]{\mathcal{#1}}
\newcommand{\inner}[1]{\left\langle #1 \right\rangle}
\newcommand{\figlabel}[1]{\label{fig:#1}}
\newcommand{\eqnlabel}[1]{\label{eqn:#1}}
\newcommand{\eqn}[1]{\eqref{eqn:#1}}
\newcommand{\fig}[1]{fig.(\ref{fig:#1})}
\newcommand{\Fig}[1]{Fig.(\ref{fig:#1})}
\newcommand{\etal}{\textit{et al. }}
\definecolor{darkgreen}{rgb}{0,0.65,0}
\renewcommand{\P}{\mo{P}} 
\newcommand{\Acl}{\A_\textit{cl}}
\newcommand{\VK}{\mo{V}_{\K}}
\newcommand{\mVK}{\set{V}_{\K}}
\newcommand{\Phin}[1]{\mo{\Phi}_{#1}}
\newcommand{\vPhi}{\mo{\Phi}}
\renewcommand{\AA}{\Exp{\Phin{n}\A\Phin{n}^T}}
\newcommand{\BB}{\Exp{\Phin{n}\B\Phin{m}^T\Phin{m\times(N+1)}^T}}
\title{A Polynomial Chaos Framework for Designing Linear Parameter Varying Control Systems}
\author{Raktim Bhattacharya%
\thanks{Raktim Bhattacharya is with Faculty of Aerospace Engineering, Texas A\&M University, College Station, TX 77843-3141. Email:{\tt\small raktim@tamu.edu}}}
\begin{document}
\maketitle
\thispagestyle{empty}
\pagestyle{empty}

\begin{abstract}
Here we use polynomial chaos framework to design controllers for linear parameter varying (LPV) dynamical systems. We assume the scheduling variable to be random and use polynomial chaos approach to synthesize the controller for the resulting linear stochastic dynamical system. The stability of the LPV system is formulated as an exponential mean-square (EMS) stability   problem. Two algorithms are presented that guarantee EMS stability of the stochastic system and correspond to parameter dependent and independent Lyapunov functions, respectively. LPV controllers from the polynomial chaos based framework is shown to outperform LPV controller from classical design for an example nonlinear system.
\end{abstract}

\section{Introduction}
Linear parameter varying (LPV) systems are of the form 
\begin{equation}
\xdot = \A(\rho)\x + \B(\rho)\u,\eqnlabel{lpv}
\end{equation}
where system matrices depend on unknown parameter $\rho(t)$, which is measurable in real-time \cite{shamma2012overview, leith2000survey}. Many nonlinear systems can be transformed to LPV systems and control systems can be designed using parameter dependent convex optimization problems. Typically, parameter dependent quantities are approximated using a known class of functions such as multilinear basis functions of $\rho$, linear fractional transformations of system matrices, or by gridding the parameter space. Both these approaches result in solution of a finite, but possible large, number of linear matrix inequalities (LMIs). Further, the choice of the basis functions or the resolution of the grid could lead to conservatisms in the design. Clearly, there is a tradeoff between problem size and conservatism in the design \cite{toker1997complexity}. 

Fujisaki \etal \cite{fujisaki2003probabilistic} addressed the computational complexity of such problems by presenting a probabilistic approach to solve these problems, via a sequential randomized algorithm, which significantly reduces the computational complexity. Here the parameter $\rho(t)$ is assumed to be bounded i.e. $\rho(t) \in \domain{\rho} \subset \real^d$ and is treated as a random variable, with a distribution $f_{\rho}(\rho)$ defined over $\domain{\rho}$. The LPV synthesis problem is solved by sampling $\domain{\rho}$ and solving the sampled LMIs using a sequential-gradient method. As with any probabilistic algorithm, there is a tradeoff between sample complexity and confidence in the solution. Often, a large number of samples are required to generate a solution with high confidence.  Also, the LMIs depend only on $\rho(t)$ and not in $\dot{\rho}(t)$ as it is in classical LPV formulation.

This paper is motivated by the work of Fujisaki \etal and is based on the idea of treating $\rho$ as a random variable. Therefore, by substituting $\rho\equiv \param$ in the system equation, we get 
\begin{align}
 \dot{\x}	 &=  \A(\param)\x + \B(\param)\u, \eqnlabel{contiDyn}
\end{align}
where $\param\in\real^d$ is a vector of uncertain parameters, with joint probability density function $f_{\param}(\param)$. Matrices $\A(\param)\in\real^{n\times n}$, $\B(\param) \in\real^{n\times m}$ are system matrices that depend on $\param$. Consequently, the solution $\x:=\x(t,\param)\in\real^n$ also depends on $\param$. Like in \cite{fujisaki2003probabilistic} we ignore temporal variation in the parameter and thus treat $\param$ as random variables. Thus, we now study the system in \eqn{lpv} as a \textit{linear time invariant system} with probabilistic system parameters. The LPV control design objective is then equivalent to designing a state-feedback law of the form $\u = \K(\param)\x$, which stabilizes the system in some suitable sense, where $\K(\param)\in\real^{m\times n}$. Thus, we are looking to obtain a parameter dependent gain $\K(\param)$ that stabilizes the system in \eqn{contiDyn}. The closed-loop system is then
\begin{align}
\dot{\x}	 &=  \left[\A(\param) + \B(\param)\K(\param)\right]\x, \nonumber \\
			 &=  \Acl\x. \eqnlabel{Acl}
\end{align}

There are two distinct differences between the work presented here and that in \cite{fujisaki2003probabilistic}. We do not use a randomized approach to solve the stochastic problem, and thus don't have issues related to confidence in the solution. In our approach, the stochastic problem is solved using polynomial chaos theory, which is a deterministic approach as described later. In addition, stability of the LPV system is formulated as an exponential mean square stability problem for the corresponding stochastic system. In \cite{fujisaki2003probabilistic}, stability of the LPV system is formulated in the probabilistic sense. Computationally, the polynomial chaos framework is superior to sampling based approach in propagating uncertainty \cite{le2010spectral}, and hence we can expect a computational advantage in using this framework to solve the stochastic formulation.

Main contributions of this paper are two LPV controller synthesis algorithms with parameter dependent and independent Lyapunov functions respectively. They are presented as theorem 1 and 2. The paper is organized as follows. We first provide a brief background on polynomial chaos theory and show how it is applied to study linear dynamical systems with random parameters. This is followed by conditions for exponential mean-square stability in the polynomial chaos framework for closed-loop systems with parameter dependent controller. This leads to theorem 1 and 2. The paper ends with an example that highlights the superiority of the polynomial chaos approach over the classical LPV design approach.

\section{Polynomial Chaos Theory}
Polynomial chaos (PC) is a non-sampling based method to determine evolution of uncertainty in dynamical system, when there is probabilistic uncertainty in the system parameters. Polynomial chaos was first introduced by Wiener \cite{wiener}
where Hermite polynomials were used to model stochastic processes
with Gaussian random variables. It can be thought of as an extension of Volterra's theory of nonlinear functionals for stochastic systems \cite{volterra,pcFEM}. According to Cameron and Martin \cite{CameronMartin} such an expansion converges in the $\mathcal{L}_2$ sense for any arbitrary stochastic process with
finite second moment. This applies to most physical systems. Xiu
\etal \cite{xiu:02} generalized the result of Cameron-Martin to various
continuous and discrete distributions using orthogonal polynomials
from the so called Askey-scheme \cite{Askey-Polynomials} and
demonstrated $\mathcal{L}_2$ convergence in the corresponding Hilbert
functional space. The PC framework has been applied to applications including
stochastic fluid dynamics \cite{pcFluids2,pcFluids4,pcFluids5},
stochastic finite elements \cite{pcFEM}, and solid mechanics
\cite{pcSolids1,pcSolids2}, feedback control \cite{hover2006application, kim2012generalized, fisher2009linear, bhattacharya2012linear} and estimation \cite{Dutta2010}. It has been shown that PC based methods are computationally far superior than Monte-Carlo based methods \cite{xiu:02, pcFluids2, pcFluids4, pcFluids5, le2010spectral}. See \cite{eldred2009comparison} for several benchmark problems.

A general second order process $X(\omega)\in
\mathcal{L}_2(\Omega,\mathcal{F},P)$ can be expressed by polynomial
chaos as
\begin{equation}
\eqnlabel{gPC}
X(\omega) = \sum_{i=0}^{\infty} x_i\phi_i({\param}(\omega)),
\end{equation}
where $\omega$ is the random event and $\phi_i({\param}(\omega))$
denotes the polynomial chaos basis of degree $p$ in terms of the random variables
$\param(\omega)$. $(\Omega,\mathcal{F},P)$ is a probability space, where $\Omega$
is the sample space, $\mathcal{F}$ is the $\sigma$-algebra of the
subsets of $\Omega$, and $P$ is the probability measure. According to Cameron and Martin \cite{CameronMartin} such an expansion converges in the $\mathcal{L}_2$ sense for any arbitrary stochastic process with finite second moment. In practice, the infinite series is truncated and $X(\omega)$ is approximated by 
\[
X(\omega) \approx \hat{X}(\omega) = \sum_{i=0}^{N} x_i\phi_i({\param}(\omega)).
\] The functions $\{\phi_i\}$ are a family of
orthogonal basis in $\mathcal{L}_2(\Omega,\mathcal{F},P)$ satisfying
the relation
\begin{equation}
\Exp{\phi_i\phi_j}:= \int_{\mathcal{D}_{\param}}\hspace{-0.1in}{\basis{i}{\param}\basis{j}{\param} f_{\param}(\param)
\,d\param}  = h_i^2\param_{ij}, \eqnlabel{basisFcn}
\end{equation}
where $\param_{ij}$ is the Kronecker delta, $h_i$ is a constant
term corresponding to $\int_{\mathcal{D}_{\param}}{\phi_i^2f_{\param}(\param)\,d\param}$,
$\mathcal{D}_{\param}$ is the domain of the random variable $\param(\omega)$, and
$f_{\param}(\param)$ is a probability density function for $\param$. Table \ref{table.pc} shows the family of basis functions for random variables with common distributions.
\begin{table}[htbp]
\centering
\begin{tabular}{|c|c|}
\hline
Random Variable $\param$ & $\phi_i(\param)$ of the Wiener-Askey Scheme\\ \hline
Gaussian & Hermite \\
Uniform  & Legendre \\
Gamma   & Laguerre \\
Beta    & Jacobi\\\hline
\end{tabular}
\caption{Correspondence between choice of polynomials and given
distribution of $\param(\omega)$ \cite{xiu:02}.} \label{table.pc}
\end{table}

\subsection{Application to Dynamical Systems with Random Parameters}
With respect to the dynamical system defined in \eqn{contiDyn}, the solution can be approximated by the polynomial chaos expansion as
\begin{align}
	\x(t,\param) \approx \hat{\x}(t,\param) =  \sum_{i=0}^N \x_i(t)\basis{i}{\param},
\end{align}
where the polynomial chaos coefficients $\x_i \in \real^n$. Define $\mo{\Phi}(\param)$ to be
\begin{align}
\mo{\Phi} &\equiv \mo{\Phi}(\param) := \begin{pmatrix}\basis{0}{\param} & \cdots & \basis{N}{\param}\end{pmatrix}^T, \text{ and } \\
\mo{\Phi}_n &\equiv \mo{\Phi}_n(\param) := \mo{\Phi}(\param) \otimes \I{n},
\end{align}
where $\I{n}\in\real^{n\times n}$ is identity matrix. Also define matrix $\X\in\real^{n\times(N+1)}$, with polynomial chaos coefficients $\x_i$, as
\[ \X = \begin{bmatrix} \x_0 & \cdots & \x_N \end{bmatrix}.\]

This lets us define $\hat{\x}(t,\param)$ as 
\begin{align}
\hat{\x}(t,\param) := \X(t)\mo{\Phi}(\param) \eqnlabel{compactX}.
\end{align}
Noting that $\hat{\x} \equiv \vec{\hat{\x}}$, we obtain an alternate form for \eqn{compactX},
\begin{align}
\hat{\x} \equiv  \vec{\hat{\x}} & = \vec{\X\mo{\Phi}}  = \vec{\I{n}\X\mo{\Phi}} = (\mo{\Phi}^T\otimes \I{n})\vec{\X} = \mo{\Phi}_n^T\xpc, \eqnlabel{compactxpc}
\end{align}
where  $\xpc := \vec{\X}$, and $\vec{\cdot}$ is the vectorization operator \cite{horn2012matrix}.

Since $\hat{\x}$ from \eqn{compactxpc} is an approximation, substituting it in \eqn{Acl} we get equation error $\vo{e}$, which is given by 
\begin{align}
\vo{e} &:= \dot{\hat{\x}} - \Acl(\param)\hat{\x}  =  \mo{\Phi}_n^T\xpcdot - \Acl(\param)\mo{\Phi}_n^T\xpc.
\end{align}
Best $\set{L}_2$ approximation is obtained by setting
\begin{align}
\inner{\vo{e}\phi_i} := \Exp{\vo{e}\phi_i}=0, \text{ for } i = 0,1,\cdots,N.
\end{align}
\begin{align}
&\Exp{\Phin{n}\Phin{n}^T}\xpcdot = \Exp{\Phin{n}\Acl\Phin{n}^T}\xpc,\nonumber \\
& \implies \xpcdot  = \Exp{\Phin{n}\Phin{n}^T}^{-1}\Exp{\Phin{n}\Acl\Phin{n}^T}\xpc,  \eqnlabel{pcDynamics}\\
&\text{or } \xpcdot  = \Apc \xpc.
\end{align}
where $\Phin{n}$ and $\Acl$ depend on $\param$ as defined earlier. 

We will need the following result in the rest of the paper.
\begin{proposition} For any vector $\vo{v} \in\real^{N+1}$ and matrix $\vo{M}\in\real^{m\times n}$
\begin{equation}
		\mo{M}(\mo{v}^T\otimes \I{n}) = (\mo{v}^T \otimes \I{m}) (\I{N+1}\otimes\mo{M}), \eqnlabel{prop3}
	\end{equation}
	where $\I{\ast}$ is identity matrix with indicated dimension.
\end{proposition}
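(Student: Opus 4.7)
The plan is to reduce both sides to a common form using the mixed-product property of the Kronecker product, namely $(A\otimes B)(C\otimes D) = (AC)\otimes(BD)$ whenever the inner products $AC$ and $BD$ are well-defined.

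First I would verify the dimensions on both sides: $\mo{v}^T\otimes\I{n}$ is $n\times n(N+1)$, so the left-hand side $\mo{M}(\mo{v}^T\otimes\I{n})$ is $m\times n(N+1)$; on the right, $\mo{v}^T\otimes\I{m}$ is $m\times m(N+1)$ and $\I{N+1}\otimes\mo{M}$ is $m(N+1)\times n(N+1)$, so the product is also $m\times n(N+1)$. Dimensions agree, so the identity makes sense.

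Next I would rewrite the left-hand side by absorbing $\mo{M}$ into a Kronecker product with the $1\times 1$ scalar $1$, i.e.\ $\mo{M} = 1\otimes\mo{M}$. Applying the mixed-product identity with $A=1$, $B=\mo{M}$, $C=\mo{v}^T$, $D=\I{n}$ yields
\[
\mo{M}(\mo{v}^T\otimes\I{n}) = (1\otimes\mo{M})(\mo{v}^T\otimes\I{n}) = (1\cdot\mo{v}^T)\otimes(\mo{M}\I{n}) = \mo{v}^T\otimes\mo{M}.
\]
For the right-hand side, a direct application of the same identity with $A=\mo{v}^T$, $B=\I{m}$, $C=\I{N+1}$, $D=\mo{M}$ gives
\[
(\mo{v}^T\otimes\I{m})(\I{N+1}\otimes\mo{M}) = (\mo{v}^T\I{N+1})\otimes(\I{m}\mo{M}) = \mo{v}^T\otimes\mo{M}.
\]
Both sides collapse to the same expression $\mo{v}^T\otimes\mo{M}$, which completes the argument.

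There is no real obstacle here; the only subtle point is recognizing that $\mo{M}$ can be viewed as $1\otimes\mo{M}$ so that the mixed-product identity applies uniformly on both sides. Once that is in place, the proof is a two-line symbolic manipulation.
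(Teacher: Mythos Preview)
Your proof is correct and follows essentially the same route as the paper: both rely on writing $\mo{M}=1\otimes\mo{M}$ and applying the mixed-product identity to reduce each side to $\mo{v}^T\otimes\mo{M}$. The only cosmetic difference is that the paper chains LHS $\to \mo{v}^T\otimes\mo{M}\to$ RHS in one line, whereas you reduce each side separately to the common middle expression.
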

\begin{proof}
	\begin{align*}
		\mo{M}(\mo{v}^T \otimes \I{n})  & = (1 \otimes \mo{M})(\mo{v}^T \otimes \I{n})  \\ &= \mo{v}^T \otimes \mo{M} = (\mo{v}^T\I{N+1}) \otimes (\I{m}\mo{M})\\
		&= (\mo{v}^T \otimes \I{m})(\I{N+1}\otimes \mo{M}).	\end{align*}
\end{proof}

\section{Controller Synthesis}
The controller gain $\K(\param)$ can be introduced in the polynomial chaos framework by substituting $\Acl:= \A(\param) + \B(\param)\K(\param)$, in \eqn{pcDynamics}
to get 
\begin{align}
\xpcdot = \Exp{\Phin{n}\Phin{n}^T}^{-1}\left(\Exp{\Phin{n}\A\Phin{n}^T} + \Exp{\Phin{n}\B\K\Phin{n}^T} \right) \xpc \eqnlabel{pcCLP}
\end{align}

Polynomial chaos expansion of $\K(\param)$ can be written as
\begin{align}
\K(\param) &= \sum_{i=0}^{N} \K_i\phi_i(\param), \K_i \in \real^{m\times n}; \nonumber \\
& = \begin{bmatrix}\basis{0}{}\I{m} & \cdots & \basis{N}{}\I{m}\end{bmatrix}
\begin{bmatrix}
\K_0 \\ \vdots \\ \K_N
\end{bmatrix}, \nonumber \\
& = (\mo{\Phi}^T\otimes \I{m}) \VK = \mo{\Phi}^T_m \VK,
\end{align}
where $\VK\in\real^{m(N+1)\times n}$ is the vertical stacking of $\K_i$. The expression $\B\K\Phin{n}^T$ in \eqn{pcCLP} can be simplified using \eqn{prop3} as
\begin{align*}
\B\Phin{m}^T\VK \Phin{n}^T = \B\Phin{m}^T\Phin{m(N+1)}^T \mVK,
\end{align*}
where $\mVK:=\I{N+1} \otimes \VK $. Therefore, 
\begin{align}
\xpcdot = \Exp{\Phin{n}\Phin{n}^T}^{-1}\left(\Exp{\Phin{n}\A\Phin{n}^T} + \Exp{\Phin{n}\B\Phin{m}^T\Phin{m(N+1)}^T}\mVK\right)\xpc. 
\end{align}

Recall that for the dynamical system in \eqn{Acl}, the equilibrium solution is said to possess exponential stability of the $m^{\text{th}}$ mean if $\exists\,\param > 0$ and constants $\alpha>0,\beta>0$ such that $\|\x_0\| < \param$ implies $\forall t\geq t_0$ \cite{bertram1959stability, kats1960stability}
\begin{align}
\Exp{\|\x(t;\x_0,t_0)\|^m_m} \leq \beta \Exp{\|\x_0\|_m^m} e^{-\alpha (t-t_0)}.
\end{align}

It can be shown \cite{raktimCDC2014} that the dynamical system in \eqn{Acl}, with random variables $\param$, is exponentially stable in the $2^{\text{nd}}$ mean, or exponentially stable in the mean square sense (EMS-stable), if $\exists$ a Lyapunov function $V(\x):=\x^T\P\x$, with $\P=\P^T>0$, and $\alpha>0$ such that 
	\begin{align}
		\Exp{\dot{V}} \leq -\alpha \Exp{V}. \eqnlabel{stab_condition}
	\end{align}

\begin{theorem}
The closed-loop system \eqn{Acl} is EMS-stable with controller $\K(\param)$ if $\exists$ $\P=\P^T>0$ and $\alpha > 0$ such that 
\begin{align}
	\set{Y}\AA^T + \AA\set{Y} + \set{W}^T\BB^T + \BB\set{W} + \alpha\set{Y}\Exp{\mo{\Phi}_n\mo{\Phi}_n^T}\leq 0, \eqnlabel{pc_stab}
\end{align}
 where $\set{W} := \I{N+1}\otimes \W$, $\set{Y} := \I{N+1}\otimes \Y$, $\Y := \P^{-1}$, and $\W:=\VK\Y$.
\end{theorem}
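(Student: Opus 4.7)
The plan is to start from the EMS stability condition $\Exp{\dot V}\leq -\alpha\Exp{V}$ in \eqn{stab_condition} with $V(\x)=\x^T\P\x$, inject the polynomial chaos approximation $\hat{\x}=\Phin{n}^T\xpc$, and then apply the standard state-feedback linearizing congruence. Computing $\dot V = \x^T(\Acl^T\P+\P\Acl)\x$, substituting $\hat{\x}$ for $\x$, and taking expectations yields
\[
\Exp{\dot V} = \xpc^T\,\Exp{\Phin{n}(\Acl^T\P+\P\Acl)\Phin{n}^T}\,\xpc,\qquad \Exp{V} = \xpc^T\,\Exp{\Phin{n}\P\Phin{n}^T}\,\xpc.
\]
Requiring the inequality for arbitrary $\xpc$ reduces everything to a matrix inequality on the outer quadratic kernels.

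Since $\P$ is parameter independent, the Kronecker identity $(\I{N+1}\otimes\P)\Phin{n}=\Phin{n}\P$ (a direct consequence of the mixed-product rule) lets me commute $\P$ past each $\Phin{n}$, producing the equivalent form
\[
\Exp{\Phin{n}\Acl^T\Phin{n}^T}(\I{N+1}\otimes\P) + (\I{N+1}\otimes\P)\Exp{\Phin{n}\Acl\Phin{n}^T} + \alpha\bigl(\Exp{\mo{\Phi}\mo{\Phi}^T}\otimes\P\bigr) \leq 0.
\]
Expanding $\Acl=\A+\B\K$ and substituting the rewriting $\B\K\Phin{n}^T = \B\Phin{m}^T\Phin{m(N+1)}^T\mVK$ established just before the theorem via Proposition 1, the middle factor becomes $\AA+\BB\mVK$, leaving a bilinear matrix inequality in $\P$ and $\mVK$.

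The final step is the classical linearizing congruence: pre- and post-multiply by $\set{Y}=\I{N+1}\otimes\Y$ with $\Y=\P^{-1}$. The products $(\I{N+1}\otimes\P)(\I{N+1}\otimes\Y)$ collapse to the identity, and the Kronecker identity $\mVK\set{Y}=\I{N+1}\otimes(\VK\Y)=\set{W}$ absorbs the controller block into the new decision variable $\W=\VK\Y$. Recognizing $\Exp{\mo{\Phi}\mo{\Phi}^T}\otimes\Y = \set{Y}\Exp{\Phin{n}\Phin{n}^T}$ converts the decay term into its advertised form, producing precisely \eqn{pc_stab}.

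The main obstacle is the Kronecker bookkeeping: one must commute the parameter-independent $\P$ through $\Phin{n}$, apply Proposition 1 at exactly the right spot to pull $\mVK$ outside the Galerkin projection before attempting any congruence, and keep careful track of the block dimensions (notably $\mVK\in\real^{m(N+1)^2\times n(N+1)}$). Once these identities are assembled, the underlying pattern is simply the familiar $\Y=\P^{-1}$, $\W=\K\Y$ state-feedback linearization adapted to the Kronecker-structured design variables of the polynomial chaos formulation.
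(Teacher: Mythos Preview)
Your proposal is correct and follows essentially the same route as the paper: start from $V=\x^T\P\x$, insert the polynomial chaos approximation, use the Kronecker commutation $\P\Phin{n}^T=\Phin{n}^T(\I{N+1}\otimes\P)$ (the paper's Proposition~1) together with the rewriting $\B\K\Phin{n}^T=\B\Phin{m}^T\Phin{m(N+1)}^T\mVK$ to obtain a BMI, and then apply the standard $\Y=\P^{-1}$, $\W=\VK\Y$ congruence in its Kronecker-lifted form $\set{W}=\mVK\set{Y}$. The only cosmetic difference is that the paper splits $\dot V$ as $\dot{\x}^T\P\x+\x^T\P\dot{\x}$ before substituting, whereas you compute $\x^T(\Acl^T\P+\P\Acl)\x$ directly; the content is identical.
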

\begin{proof}
With $V(\x):=\x^T\P\x$, and $\P = \P^T > 0$, $\dot{V}= \dot{\x}^T\P\x + \x^T\P\dot{\x}$. The term $\dot{\x}^T\P\x$ can be approximated by the polynomial chaos expansion as 
	\begin{align*}
		&\dot{\x}^T\P\x \\
		& \approx \xpc^T\left[\left(\A\Phin{n}^T + \B\Phin{m}^T \Phin{m(N+1)}^T\mVK\right)^T\P\Phin{n}^T\right]\xpc, \\		
		&= \xpc^T\left(\Phin{n}\A^T\P\Phin{n}^T  +  \mVK^T\Phin{m(N+1)}\Phin{m}\B^T\P\Phin{n}^T \right)\xpc,
			\end{align*}
Using \eqn{prop3} we can write $\P\Phin{n}^T = \Phin{n}^T \set{P}$, where  $\set{P}:= \I{N+1}\otimes \P$. Substituting them in $\dot{\x}^T\P\x$ we get
	\begin{align}
	\dot{\x}^T\P\x \approx \xpc^T\left(\Phin{n}\A^T\Phin{n}^T \set{P}  + \mVK^T \Phin{m(N+1)}\Phin{m}\B^T\Phin{n}^T \set{P}\right)\xpc.
	\end{align}
	The Lyapunov function $V:=\x^T\P\x$ can be written as
	\begin{align*}
	V&:=\x^T\P\x = \xpc^T\Phin{n}\P\Phin{n}^T\xpc =  \xpc^T\Phin{n}\Phin{n}^T\set{P}\xpc.
	\end{align*}
	Therefore, $\Exp{\dot{V}}\leq -\alpha\Exp{V}$ is equivalent to
	\begin{align*}
	\Exp{\Phin{n}\A^T\Phin{n}^T}\set{P} + \mVK^T \Exp{\Phin{m(N+1)}\Phin{m}\B^T\Phin{n}^T} \set{P} + (\ast)^T  \leq  -\alpha \Exp{\Phin{n}\Phin{n}^T}\set{P},
	\end{align*}
where $(\ast)^T$ are the symmetric terms.
The above BMI can be convexified using the well known substitutions \cite{bernussou1989linear} $\Y := \P^{-1}$, and $\W:=\VK\Y$. These substitutions can be written in terms of $\set{P}, \mVK, \set{Y}$, and $\set{W}$ as
\begin{align*}
\set{W} & = \I{N+1}\otimes \W = \I{N+1}\I{N+1}\otimes \VK\Y \\
& = (\I{N+1}\otimes \VK)(\I{N+1}\otimes \Y) \\
& = \mVK\set{Y}.
\end{align*}
It is also straightforward to show $\set{P} = \set{Y}^{-1}$ and $\mVK = \set{W}\set{Y}^{-1}$. Substituting these in the above BMI, and pre-post multiplying by $\set{Y}$, we get the result.
\end{proof}

\begin{theorem}
The closed-loop system \eqn{Acl} is EMS-stable with controller $\K(\param)$ if $\exists$ $\P(\param)=\P^T(\param)>0$ and $\alpha > 0$ such that 
\begin{equation}
\set{Y}\vo{M}_1^T + \vo{M}_1\set{Y} + \set{W}^T\vo{M}_2^T + \vo{M}_2\set{W} + \alpha\set{Y}\vo{M}_0 \leq 0,
\end{equation}
where 
\begin{align*}
&\P(\param) :=\Phin{n}^T(\param)\set{P}\Phin{n}(\param),\\
&\set{P} :=\I{N+1}\otimes \P_0, \P_0=\P_0^T>0,\\
& \vo{M}_0 = \Exp{(\Phin{n}\Phin{n}^T)^2},\\
&\vo{M}_1  = \Exp{\Phin{n}\Phin{n}^T\Phin{n}\A\Phin{n}^T}, \\
&\vo{M}_2  = \Exp{\Phin{n}\Phin{n}^T\Phin{n}\B\Phin{m}^T\Phin{m(N+1)}^T}.
\end{align*}
\end{theorem}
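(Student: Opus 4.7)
The plan is to mirror the proof of Theorem 1 but with the parameter-dependent Lyapunov function $V(\x,\param) = \x^T\P(\param)\x$, where $\P(\param) = \Phin{n}^T\set{P}\Phin{n}$ and $\set{P} = \I{N+1}\otimes \P_0$ is deterministic. Because $\param$ is a random variable rather than a time-varying signal, $\P(\param)$ has no time derivative, so $\dot V = \x^T(\Acl^T\P + \P\Acl)\x$ without any extra $\dot\P$ term; this is the feature that keeps the result an LMI after convexification.

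\textbf{Computing $\Exp{V}$.} First I would plug in the PC approximation $\hat\x = \Phin{n}^T\xpc$ to get $V \approx \xpc^T\Phin{n}\Phin{n}^T\set{P}\Phin{n}\Phin{n}^T\xpc$. The crucial observation is that both $\set{P} = \I{N+1}\otimes\P_0$ and $\Phin{n}\Phin{n}^T = (\mo{\Phi}\mo{\Phi}^T)\otimes\I{n}$ are Kronecker-structured and therefore commute. Sliding $\set{P}$ through one copy of $\Phin{n}\Phin{n}^T$ collapses the product to $\set{P}(\Phin{n}\Phin{n}^T)^2$, and taking expectation gives $\Exp{V}\approx \xpc^T\set{P}\vo{M}_0\xpc$.

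\textbf{Computing $\Exp{\dot V}$.} The two cross-terms become $\Phin{n}\Acl^T\Phin{n}^T\set{P}\Phin{n}\Phin{n}^T$ and $\Phin{n}\Phin{n}^T\set{P}\Phin{n}\Acl\Phin{n}^T$. I would again slide $\set{P}$ past the Kronecker-structured $\Phin{n}\Phin{n}^T$ blocks (but never past the $\Acl$ sandwich, since $\set{P}$ need not commute with the $\param$-dependent $\A$ or $\B$). Splitting $\Acl = \A + \B\K$ and using $\K = \Phin{m}^T\VK$ together with $\VK\Phin{n}^T = \Phin{m(N+1)}^T\mVK$ (from Prop.\ 1), the factor $\Exp{\Phin{n}\Phin{n}^T\Phin{n}\A\Phin{n}^T}$ is exactly $\vo{M}_1$ and the $\B\K$ piece contributes $\vo{M}_2\mVK$. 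Assembling, $\Exp{\dot V}\approx \xpc^T\bigl[(\vo{M}_1^T + \mVK^T\vo{M}_2^T)\set{P} + \set{P}(\vo{M}_1 + \vo{M}_2\mVK)\bigr]\xpc$.

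\textbf{Convexification.} The EMS condition $\Exp{\dot V} + \alpha\Exp{V}\leq 0$ is then a BMI in $\set{P}$ and $\mVK$. Setting $\set{Y} := \set{P}^{-1}$ and $\set{W} := \mVK\set{Y}$ and pre- and post-multiplying by $\set{Y}$, every $\set{P}\set{Y}$ pair becomes identity and the $\alpha$-term simplifies to $\alpha\set{Y}\vo{M}_0$ because $\set{Y}$ and $\vo{M}_0$ are again both Kronecker-structured and therefore commute. This is precisely the claimed inequality. The main obstacle is the bookkeeping of commutativity: $\set{P}$ and $\set{Y}$ pass freely through the deterministic Kronecker pieces $\Phin{n}\Phin{n}^T$ but must not be slid past the $\param$-dependent $\A$, $\B$, or $\K$, so the order in which the simplifications are carried out matters.
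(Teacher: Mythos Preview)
Your proposal is correct and follows essentially the same route as the paper: the same parameter-dependent Lyapunov function, the same key commutation identity $\set{P}\Phin{n}\Phin{n}^T = \Phin{n}\Phin{n}^T\set{P}$ (both have compatible Kronecker structure), the same use of Proposition~1 to extract $\mVK$, and the same congruence transformation with $\set{Y}=\set{P}^{-1}$, $\set{W}=\mVK\set{Y}$ for convexification. The only cosmetic difference is that the paper carries $(\Phin{n}\Phin{n}^T)^2\set{P}$ in the $V$-expression while you carry $\set{P}(\Phin{n}\Phin{n}^T)^2$; since these commute (and likewise $\set{Y}$ and $\vo{M}_0$ commute, as you note), the two orderings yield the same LMI.
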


\begin{proof}
Define Lyapunov function $V(\x):=\x^T\P(\param)\x$, with $\P(\param) :=\Phin{n}^T(\param)\set{P}\Phin{n}(\param)$, $\set{P} :=\I{N+1} \otimes \P_0$, and $\P_0=\P_0^T>0$. The form for $\P(\param)$ is motivated by the literature on sum-of-square representation of matrix polynomials \cite{scherer2004asymptotically, ichihara2009optimal}, which ensures $\P(\param)>0$. The Lyapunov function can be simplified as
\begin{align*}
V(\x) &= \x^T\P(\param)\x \\
& \approx \xpc^T\Phin{n} \Phin{n}^T (\I{N+1} \otimes \P_0) \Phin{n} \Phin{n}^T \xpc \\
& = \xpc^T\Phin{n} \Phin{n}^T (\I{N+1} \otimes \P_0) \Phin{n} \Phin{n}^T \xpc \\
& = \xpc^T ((\vPhi\vPhi^T)\otimes\I{n})(\I{N+1} \otimes \P_0) ((\vPhi\vPhi^T)\otimes\I{n}) \xpc\\
& =  \xpc^T ((\vPhi\vPhi^T)\otimes\I{n})((\vPhi\vPhi^T)\otimes\P_0)\xpc \\
& = \xpc^T  ((\vPhi\vPhi^T)^2\otimes\I{n})(\I{N+1} \otimes \P_0) \xpc \\
& = \xpc^T (\Phin{n}\Phin{n}^T)^2\set{P} \xpc.
\end{align*}

$\dot{V}= \dot{\x}^T\P\x + \x^T\P\dot{\x}$. The term $\dot{\x}^T\P\x$ can be approximated by the polynomial chaos expansion as

\begin{align*}
 \dot{\x}^T\P(\param)\x\approx \xpc^T\left(\A\Phin{n}^T + \B\Phin{m}^T \Phin{m(N+1)}^T\mVK\right)^T \Phin{n}^T\set{P}\Phin{n} \Phin{n}^T \xpc.
\end{align*}
We next show that $\set{P}\Phin{n} \Phin{n}^T = \Phin{n}\Phin{n}^T\set{P}$.
\begin{align*}
\set{P}\Phin{n} \Phin{n}^T & = (\I{N+1}\otimes\P_0)(\vPhi\otimes \I{n})(\vPhi^T\otimes \I{n}) \\
& = (\I{N+1}\otimes\P_0)((\vPhi\vPhi^T) \otimes \I{n})\\
& = (\vPhi\vPhi^T) \otimes \P_0 \\
& = (\vPhi\vPhi^T)\I{N+1} \otimes \I{n}\P_0\\
& = ((\vPhi\vPhi^T) \otimes \I{n}\I{n}) (\I{N+1} \otimes \P_0) \\
& = \Phin{n}\Phin{n}^T\set{P}.
\end{align*}
Therefore,
\begin{align*}
&\dot{\x}^T\P(\param)\x \\
&\approx \xpc^T\left(\A\Phin{n}^T + \B\Phin{m}^T \Phin{m(N+1)}^T\mVK\right)^T \Phin{n}^T\Phin{n}\Phin{n}^T\set{P}\xpc \\
&= \xpc\left(\Phin{n}\A^T\Phin{n}^T\Phin{n}\Phin{n}^T\set{P} + 
\mVK^T\Phin{m(N+1)}\Phin{m}\B^T\Phin{n}^T\Phin{n}\Phin{n}^T\set{P}\right)\xpc.
\end{align*}
Therefore,  $\Exp{\dot{V}}\leq -\alpha\Exp{V}$ is equivalent to
\[
\vo{M}_1^T\set{P} + \set{P}\vo{M}_1 + \mVK^T\mo{M}_2^T\set{P} + \set{P}\mo{M}_2\mVK + \alpha \vo{M}_0\set{P}\leq 0,
\]
which can be convexified as in Theorem 1 to obtain the result.
\end{proof}

\section{Example}
Here we consider the control of the following nonlinear system
\begin{equation}
\begin{pmatrix}\dot{x}_1 \\ \dot{x}_2 \end{pmatrix} = 
\begin{bmatrix}0 & 1 \\ -1 & (1-x_1^2)\end{bmatrix} \begin{pmatrix}{x}_1 \\ {x}_2 \end{pmatrix} + \begin{bmatrix}0 \\ 1 \end{bmatrix}u. 
\eqnlabel{vdp}
\end{equation}
The above systems is the Van der Pol oscillator with a control input. We transform it to an LPV system by introducing the parameter $\rho:=1-x_1^2$. The objective is to design a state feedback controller $K(\rho)$ that will quadratically stabilize the above system. We restrict stabilization of the set defined by $\x\in[-5,5]^2$. Therefore, $\rho \in [-24,1]$. For the PC LPV algorithm, we assume $\rho \equiv \param \in \uniform{[-24,1]}$, a uniformly distributed random variable over $[-24,1]$. \Fig{plots} shows the state and control trajectories of \eqn{vdp} with three control systems $\K_\text{LTI}, \K_\text{LPV}(\rho)$ and $\K_\text{pcLPV}(\param)$, and were designed with $\alpha = 1$ in the following manner:
\begin{itemize}
\item $\K_\text{LTI}$, from linearized dynamics ($A_\text{LTI},B_\text{LTI}$) about $(0,0)$, satisfying
\begin{align*} 
\Y_\text{LTI}\A_\text{LTI}^T  + \A_\text{LTI} \Y_\text{LTI} + \W^T_\text{LTI}\B_\text{LTI}^T  + \B_\text{LTI} \W_\text{LTI} + \alpha \Y_\text{LTI} \leq 0.
\end{align*}
\item $\K_\text{LPV}(\rho)$, from LPV dynamics 
$$
\A_\text{LPV}(\rho):= \begin{bmatrix}0 & 1 \\ -1 & \rho \end{bmatrix},  \B_\text{LPV}(\rho) := \begin{bmatrix}0 \\ 1 \end{bmatrix}
$$
satisfying
\begin{align*} 
\Y_\text{LPV}(\rho_k)\A_\text{LPV}^T(\rho_k)  + \A_\text{LPV}(\rho_k) \Y_\text{LPV}(\rho_k) + 
\W^T_\text{LPV}(\rho_k)\B_\text{LPV}^T(\rho_k)  + \B_\text{LPV}(\rho_k) \W_\text{LPV}(\rho_k) +  \alpha \Y_\text{LPV}(\rho_k) \leq 0,
\end{align*}
where 
$$
\Y_\text{LPV}(\rho_k) :=\Y_0 + \rho_k \Y_1 > 0, \Y_i = \Y_i^T,
$$
and $\rho_k$ are the samples from $\uniform{[-24,1]}$.
\item $\K_\text{pcLPV}$, from theorem 2, assuming $\rho \equiv \param \in \uniform{[-24,1]}$.
\end{itemize}
 \Fig{plots} shows the state and control trajectories of the nonlinear closed-loop system for initial condition $(5,5)$. $\K_\text{pcLPV}$ is designed with \textit{first order} polynomial chaos expansion and several $\K_\text{LPV}$s are designed with $2, 5, 10$ and $50$ samples from $\uniform{[-24,1]}$.
 
For this problem, we make the following observations.
\begin{enumerate}
\item Increasing the order of the polynomial chaos expansion does not significantly improve controller performance. We are able to achieve high performance with very low order polynomial chaos expansion.
\item As seen from \fig{plots} increasing the number of  samples in the design of $\K_\text{LPV}$, improves the performance, but doesn't quite reach the performance of $\K_\text{pcLPV}$. The computational times for $\K_\text{LPV}$ synthesis are as follows:\\[0.1in] 
\begin{tabular}[h!]{c|c}
\hline Controller & Synthesis Time (s) \\ \hline
$\K_\text{pcLPV}$ (first order PC) & 0.3556 \\
$\K_\text{LPV}$ (2 samples) & 0.4008 \\
$\K_\text{LPV}$ (5 samples) & 0.5463 \\
$\K_\text{LPV}$ (10 samples) & 0.6723 \\
$\K_\text{LPV}$ (50 samples) & 1.9943 \\ \hline
\end{tabular}\\[0.1in]
Thus $\K_\text{pcLPV}$ has a clear advantage over sampled based $\K_\text{LPV}$ design in terms of controller performance and computational complexity.

\item Both $\K_\text{pcLPV}$ and $\K_\text{LPV}$ outperform $\K_\text{LTI}$ as expected.
\end{enumerate}
The controllers were synthesized in \texttt{MATLAB} \cite{matlab} using \texttt{CVX} \cite{grant2008cvx}. 

\section{Summary}
In this paper we presented a new framework to synthesize LPV controllers using polynomial chaos framework. This framework builds on the probabilistic representation of the scheduling variable and the synthesis was done by treating the LPV system as a stochastic linear system. Two synthesis algorithms were presented which correspond to parameter dependent and independent Lyapunov functions. The algorithms were tested on a nonlinear dynamical system and outperformed controllers synthesized using classical LPV design techniques.

\begin{figure}[h]
\includegraphics[width=0.5\textwidth]{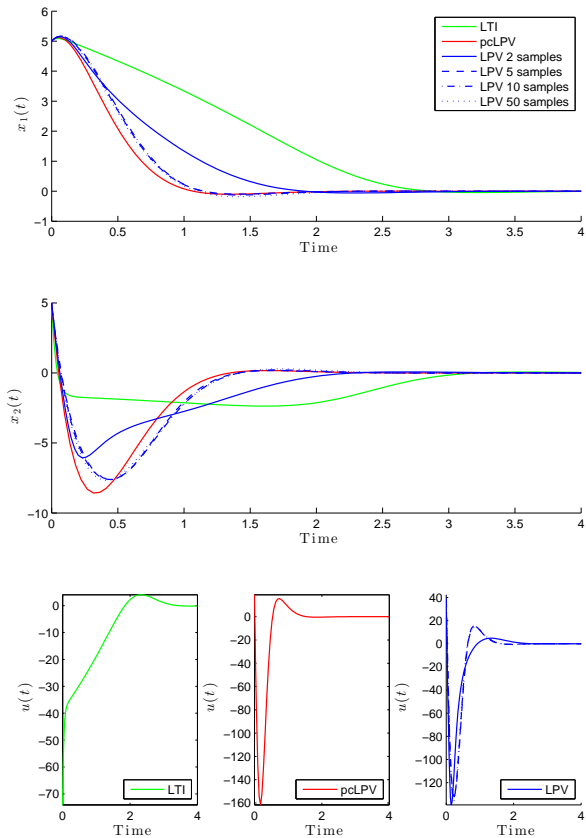}
\caption{State and control trajectories}
\figlabel{plots}
\end{figure}

\bibliographystyle{IEEEtran}
\bibliography{/Users/raktim/Dropbox/myfiles/Latex/raktim}

\end{document}